\title{Red-Blue Point Separation for Points on a Circle} 
\titlerunning{Red-Blue Point Separation for Points on a Circle} 
\author{Neeldhara Misra}{Indian Institute of Technology, Gandhinagar, India}{neeldhara.m@iitgn.ac.in}{}{}
\author{Harshil Mittal}{Indian Institute of Technology, Gandhinagar, India}{mittal\_harshil@iitgn.ac.in}{}{}
\author{Aditi Sethia}{Indian Institute of Technology, Gandhinagar, India}{aditi.sethia@iitgn.ac.in}{}{}
\authorrunning{N. Misra and H. Mittal and A. Sethia} 
\keywords{red-blue point separation, axis-parallel lines, circle} 
\begin{document}

\maketitle

\begin{abstract}
    Given a set $R$ of red points and a set $B$ of blue points in the plane, the Red-Blue point separation problem asks if there are at most $k$ lines that separate $R$ from $B$, that is, each cell induced by the lines of the solution is either empty or monochromatic (containing points of only one color). A common variant of the problem is when the lines are required to be axis-parallel. The problem is known to be NP-complete for both scenarios~\cite{CalinescuDKW05,Megiddo88}, and \textsf{W[1]}-hard parameterized by $k$ in the former setting~\cite{BonnetGL17} and \textsf{FPT} in the latter~\cite{KratschMMPS2020}. We demonstrate a polynomial time algorithm for the special case when the points lie on a circle.  Further, we also demonstrate the W-hardness of a related problem in the axis-parallel setting, where the question is if there are $p$ horizontal and $q$ vertical lines that separate $R$ from $B$. The hardness here is shown in the parameter $p$.
\end{abstract}

\section{Introduction}

Given a set $R$ of red points and a set $B$ of blue points in the plane, the \textsc{Red-Blue Separation} (\textsf{RBS}) problem asks if there are at most $k$ lines that separate $R$ from $B$, that is, each cell induced by the lines of the solution is either empty or monochromatic (containing points of only one color). Equivalently, $R$ is separated from $B$ if, for every straight-line segment $\ell$ with one endpoint in $R$ and the other one in $B$, there is at least one line in the solution that intersects $\ell$.  A common variant of the problem is when the solution lines are required to be axis-parallel (\textsf{APRBS}). Questions about the discrete geometry on red and blue points in general, and their separability using geometric objects in particular, are of fundamental interest. This makes \textsf{RBS} a well-studied problem on its own right. It is also motivated by the problem of univariate discretization of continuous variables in the context of machine learning~\cite{FayyadI93,KujalaE07}. 



\textsf{RBS} is known to be \textsf{NP}-complete~\cite{Megiddo88}, \textsf{APX}-hard~\cite{CalinescuDKW05}, and \textsf{W[1]}-hard when parameterized by the solution size~\cite{BonnetGL17}. The approximation hardness holds for the \textsf{APRBS} problem as well, while in contrast the parameterized intractability applies only to the \textsf{RBS} problem. Specifically, it is known that an algorithm running in time $f(k)n^{o(k/\log k)}$, for any computable function $f$, would disprove ETH~\cite{BonnetGL17}. This reduction crucially relies on selecting lines from a set with a large number of different slopes --- in particular, the number of distinct slopes of the lines used is not bounded by a function of~$k$. In contrast, a recent development demonstrates that \textsf{APRBS} is, in fact, \textsf{FPT} parameterized by the solution size~\cite{KratschMMPS2020}. Prior to this, the best that was known in this context was an algorithm from~\cite{BonnetGL17} showing that \textsf{APRBS} is \textsf{FPT} parameterized by the number of blue points (or the number of red points). 

For the case where $k = 1$ and $k = 2$, the problem is solvable in $O(n)$ and $O(n\log n)$ time respectively~\cite{HurtadoMRS04}. A $2$-approximation algorithm is known for \textsf{APRBS}~\cite{CalinescuDKW05} by casting the separation problem as a special case of the rectangle stabbing problem\footnote{This is based on the idea that lines separating $R$ from $B$ must stab all rectangles formed by red and blue points at the corners. The parameterized version of rectangle stabbing is known to be intractable.}. We note that the $2$-approximation algorithm and \textsf{APX}-hardness applies even to the separation of monochromatic point sets (where the goal is to separate \emph{all} points from each other) and this version problem is also known to admit a approximation (OPT $\log$ OPT)-approximation~\cite{Har-PeledJ20}.


\paragraph*{Our Contributions}

We first address a question raised in the discussions from~\cite{CalinescuDKW05}: \emph{Do special cases, e.g., points in convex position, admit better approximation ratios or even exact solutions?} We make partial progress on this question by answering it in the affirmative when the input points lie on a circle (which would be a special case\footnote{We speculate that these ideas would also be relevant for the more general scenario of points in convex position. While the algorithm for \textsf{RBS} in fact works as-is for points in convex position, the details for the axis-parallel variant are less obvious.} of points in convex position). In particular, we show that when points lie on a circle, both \textsf{RBS} and \textsf{APRBS} admit exact polynomial-time algorithms. Interestingly, the \textsf{RBS} problem is significantly simpler in this special case compared to its axis-parallel counterpart. For the latter, the size of the optimal solution is captured by a structural parameter of a graph that is naturally associated with the point set. Our proof of this fact is algorithmic and can be used to solve the associated computational question. 

Further, we introduce a natural variant of \textsf{APRBS}, which is the $(p,q)$-\textsf{APRBS} problem. Here, as before, we are given a set of red and blue points, and the question is if there is a set of at most $p$ horizontal lines and at most $q$ vertical lines that separate $R$ from $B$. We show that this problem is \textsf{W[2]}-hard when parameterized by $p$ alone.

The rest of the paper is organized as follows. We formally define the problems that we address in Section~\ref{sec:prelims} and focus on the case of points on a circle in Section~\ref{sec:circle} for both \textsf{RBS} and \textsf{APRBS}. We describe the \textsf{W[2]}-hardness result for $(p,q)$-\textsf{APRBS} parameterized by $p$ in Section~\ref{sec:whard}.



\section{Preliminaries}
\label{sec:prelims}

For positive integers $x$, $y$, let $[x]$ be the set of integers between $1$ and $x$, and $[x, y]$ the set of integers between $x$ and $y$. Given a set of points $R \cup B$ in the plane, $R$ is said to be separated from $B$ by a collection of lines $L$ if every straight-line segment with one endpoint in $R$ and the other one in $B$ is intersected by at least one line in $L$. We adopt the convention of requiring a ``strict'' separation, which is to say that no point in $R \cup B$ is on a separating line. We let $n := |R \cup B|$, $r = |R|$ and $b = |B|$. The computational problems that we study are the following. 

\textsc{Red-Blue Separation.} (\textsf{RBS}) Given a set $R$ of red points and a set $B$ of blue points in the plane and a positive integer $k$ as input, determine if there exists a set of at most $k$ lines that separate $R$ from $B$.

\textsc{Axis-Parallel Red-Blue Separation.} (\textsf{APRBS}) Given a set $R$ of red points and a set $B$ of blue points in the plane and a positive integer $k$ as input, determine if there exists a set of at most $k$ axis-parallel lines that separate $R$ from $B$.

(p,q)-\textsc{Axis-Parallel Red-Blue Separation.} ((p,q)-\textsf{APRBS}) Given a set $R$ of red points and a set $B$ of blue points in the plane and positive integers $p$ and $q$ as input, determine if there exists a set of at most $p$ horizontal and $q$ vertical lines that separate $R$ from $B$.

\section{Points on a Circle}
\label{sec:circle}

In this section, we focus on the special case when all the points lie on a circle $C$. Let $P = (R \cup B)$ denote a set of $n$ points on a circle, with $r$ red points and $b$ blue points. As usual, $R$ (respectively, $B$) denotes the set of red (respectively, blue) points. Without loss of generality, we assume that all points of $P$ lie on an unit circle centered at the origin. Fix an order $\sigma$ on $R \cup B$ based on the order of their appearance on the circle, starting at $(1,0)$ and moving around the circle counterclockwise. We let $r_i$ and $b_j$ denote the $i^{th}$ red and the $j^{th}$ blue point that we encounter in this order. For a point $p$ on the circle, we let $\mathsf{col}(p)$ denote the color of the point $p$.

We call a maximal sequence of monochromatic points in $\sigma$ a \emph{chunk}. Let $\mathcal{C}_P$ denote the set of chunks of $P$. In the order of their appearance on the circle, we denote the individual chunks by $C_1, \ldots, C_w$. The color of a chunk is the color of any point belonging to it. We refer to a chunk consisting of red (blue) points as a red (blue) chunk. We overload notation and let $\mathsf{col}: \mathcal{C}_P \rightarrow \{R,B\}$ be a function that returns the color of a chunk. The arc of $C$ starting at the last point on the $r^{th}$ chunk and the first point on the $(r+1)^{th}$ chunk is called a \emph{switch}. Let $\mathcal{S}_P$ denote the set of switches of $P$. Note that any instance with $w$ chunks has $w$ switches. We denote the switches by $S_1, \ldots, S_w$ in the order of their appearance on the circle. Also note that $w$ must always be even, and that there are $\frac{w}{2}$ red chunks and $\frac{w}{2}$ blue chunks. We say that a switch $S$ is \emph{stabbed} by a line $\ell$ if $\ell \cap S \neq \emptyset$. We first make the following observation. 

\begin{proposition}
    \label{prop-switches}
Let $P = (R \cup B)$ be a red-blue point set on a circle. If $L$ is a set of lines that separates $R$ from $B$, then every switch must be stabbed by some line in $L$. 
\end{proposition}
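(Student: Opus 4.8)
The plan is to fix an arbitrary switch $S$, attach to it a bichromatic chord whose endpoints are exactly the two ends of $S$, and then argue that any line crossing that chord is forced to also meet the arc $S$. Concretely, let $S = S_i$ be the switch lying between the last point $p$ of chunk $C_i$ and the first point $q$ of the next chunk $C_{i+1}$. Since chunks are maximal monochromatic runs, $p$ and $q$ are consecutive points of $\sigma$ of opposite colours, and the open arc $S$ contains no point of $P$.

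First I would invoke the definition of separation. The segment $\overline{pq}$ has one red and one blue endpoint, so by hypothesis some line $\ell \in L$ intersects $\overline{pq}$. By the strict-separation convention, $\ell$ passes through neither $p$ nor $q$, and hence meets the \emph{open} chord $\overline{pq}$ at some interior point $x$.

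Next I would pass from the chord to the arc via a convexity argument, which I expect to be the crux of the proof. Let $D$ denote the circular segment bounded by the chord $\overline{pq}$ and the switch arc $S$; equivalently, $D$ is the intersection of the closed unit disk with the closed half-plane determined by the line through $p$ and $q$ on the side containing $S$. As an intersection of two convex sets, $D$ is convex, and its boundary is precisely $\overline{pq} \cup \overline{S}$, where $\overline{S}$ denotes the closed arc. Crossing $\overline{pq}$ transversally at the interior point $x$, the line $\ell$ enters the interior of $D$, so $\ell \cap D$ is a nondegenerate segment whose two endpoints lie on $\partial D = \overline{pq} \cup \overline{S}$. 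Since $\ell$ is not the line through $p$ and $q$ (it meets the interior of $D$), it meets the straight segment $\overline{pq}$ in the single point $x$; therefore the other endpoint of $\ell \cap D$ lies on $\overline{S}$, and it cannot be $p$ or $q$ by strict separation, so it lies on the open arc $S$. Hence $\ell \cap S \neq \emptyset$, i.e.\ $\ell$ stabs $S$.

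Finally, since $S$ was an arbitrary switch, every switch is stabbed by some line of $L$, as claimed. The only delicate point is the transition from the straight chord to the curved arc; everything rests on the observation that a chord together with its corresponding arc bounds a convex region, so a line entering through the chord is forced out through the arc.
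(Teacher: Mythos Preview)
Your argument is correct and follows essentially the same route as the paper: both pick the bichromatic pair $p,q$ at the ends of a switch, invoke the separation hypothesis on the chord $\overline{pq}$, and conclude that the arc is stabbed. The paper phrases this as a one-line proof by contradiction and simply asserts that if no line stabs the arc then $p$ and $q$ are not separated; your convexity argument about the circular segment bounded by the chord and the arc makes that geometric step explicit, which is a nice addition but not a different idea.
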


\begin{proof}
Suppose that there exist a switch $S_i \in \mathcal{S}_P$ that is not stabbed by any line from the set $L$. Note that $S_i$ separates the chunks $C_i$ and $C_{i+1}$. Without loss of generality, suppose col($C_i$)= $R$ and col($C_{i+1}$)=$B$. Since $S_i$ is not stabbed by any line from $L$, the last point of $C_i$ and first point of $C_{i+1}$ are not separated, which leads to the contradiction of the fact that $L$ separates $R$ from $B$. 
Therefore, every switch must be stabbed by some line in $L$.
\end{proof}

Based on Proposition~\ref{prop-switches}, we have that in an instance with $w$ switches, $\frac{w}{2}$ is a lower bound on the optimum, since any line can stab at most two switches. In the next subsection, we show that this can always be achieved by a set of general lines. For axis-parallel lines, we strengthen the lower bound further using an auxiliary graph structure on the switches, and demonstrate an algorithmic argument to match the stronger lower bound. 

\subsection{The Case of General Lines}

With arbitrary lines, our strategy is simple: we ``protect'' each monochromatic chunk of a fixed color with a single line passing through it's adjacent switches. In particular, consider any chunk $C_i$ such that $\mathsf{col}(C_i) = B$. Fix an arbitrary point $p_i$ in the switch\footnote{If $i = 1$, then we let $S_{-1} := S_w$.} $S_{i-1}$ and another point $q_i$ in the switch $S_i$. Let $\ell_i$ be the line passing through $p_i$ and $q_i$ and let $L := \{\ell_i~|~ \mathsf{col}(C_i) = B\}$. In other words, $L$ is the set of lines thus defined based on blue chunks. Note that there are $\frac{w}{2}$ lines in this solution, since we introduce one line for each blue chunk. Moreover, it is also easy to check that these lines separate $R$ from $B$, since every blue chunk belongs to a separate cell of this configuration. 


\subsection{The Case of Axis-Parallel Lines}

When we are restricted to axis-parallel lines in the solution, then the strategy described in the previous subsection would fail since the lines that are described need not be axis-parallel. A similar strategy does give us a simple $2$-approximation, which we describe informally. Observe that each monochromatic chunk can be protected by a ``wedge'' consisting of a pair of axis-parallel lines. Indeed, consider the points $p_i$ and $q_i$ defined as before, and let $R$ be the unique rectangle whose sides are axis-parallel and which has $p_i$ and $q_i$ as diagonally opposite corner points. Clearly, one of the other two corner points $r$ lies inside $C$. We can now choose the two axis-parallel lines that contain the edges of the rectangle which intersect at $r$, and we have a wedge-like structure that protects the chunk (depending on the length of the chunk, note that the points of the chunk may be distributed over multiple cells). This gives us a solution with $w$ lines, and is therefore a two-approximate solution.

We now demonstrate a stronger lower bound for the setting of axis-parallel lines. To this end, we introduce some terminology and define an auxiliary graph based on the point set $P$. We say that a pair of switches \emph{face each other} if there exists a horizontal or vertical line that stab both of them. A switch which faces at least one other switch is said to be \emph{nice}, a pair of switches facing each other is called a \emph{nice pair}, and a switch that is not nice is said to be \emph{isolated}. We define a graph based on $P$ that has a vertex for every switch, and an edge between every pair of vertices corresponding to switches that are nice pairs. Formally, for a red-blue point set $P = (R \cup B)$ with $w$ switches $S_1, \ldots, S_w$, we define the graph $G_P = (V_P,E_P)$ as follows: $V_P = \{v_j ~|~ 1 \leq j \leq w\} \mbox{ and } E_P = \{(v_i,v_j)~|~ (S_i,S_j) \mbox{ is a nice pair} \}$. 


Observe that every isolated switch of $P$ corresponds to an isolated vertex of $G_P$. Recall that a \emph{edge cover} of a graph $G$ is a set of edges such that every vertex of the graph is incident to at least one edge of the set. Note that a minimum-sized edge cover can be found by greedily extending a maximum matching of a graph $G$. We use the abbrevation MEC to refer to a minimum edge cover. Let $I_P \subseteq V_P$ be the set of isolated vertices of $G_P$ and let $H_P := G_P \setminus I_P$. We define $ \kappa(G_P) := |I_P| + \mathsf{MEC}(H_P),$ where $\mathsf{MEC}(G)$ denotes the size of a minimum edge cover of the graph $G$. Our first claim is that any instance $P = (R \cup B)$ of \textsf{APRBS} requires at least $\kappa(G_P)$ lines to separate $R$ from $B$. Next, we will show that this bound is tight. 

Before stating the claims formally, we make some remarks about the bound. Note that this coincides with the bound obtained as a consequence of Proposition~\ref{prop-switches} when $G_P$ has a perfect matching. Further, the bound is $w$ when $G_P$ is the empty graph, or equivalently, when every switch is an isolated switch. In this scenario, note that the approach described for the two-approximate solution will, in fact, yield an optimal solution. The intuition for the bound in the generic case is the association between lines and edges in a MEC: indeed, every edge $e$ in $G_P$ corresponds to a family of lines that stab switches corresponding to the endpoints of $e$.  Our goal is to show that we can pick one line corresponding to each edge in the MEC and one line for each isolated switch in such a way that we separate $R$ from $B$. However, it is easy to come up with examples where this does not happen, and indeed, the argument for the upper bound follows by making a bounded number of modifications to the set of lines that was proposed with guidance from the MEC. On the other hand, this association runs both ways, so we can recover subset of edges from any collection of lines separating $R$ from $B$, using that stab two switches. If a solution uses fewer than $\kappa(G_P)$ lines, the hope is that the edges recovered lead us to an edge cover that has fewer edges than the MEC, which would be a contradiction. We now formalize both sides of this argument. We begin with the lower bound.

\begin{lemma}
\label{lem:lowerbound-circle} Let $P = (R \cup B)$ be a red-blue point set on a circle. Let $L$ be a set of $k$ axis-parallel lines that separate $R$ from $B$. Then $k \geq \kappa(G_P)$. 
\end{lemma}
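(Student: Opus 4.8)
The plan is to take an arbitrary solution $L$ of $k$ axis-parallel lines separating $R$ from $B$ and extract from it a structure in $G_P$ whose cost is at least $\kappa(G_P)$, thereby forcing $k \geq \kappa(G_P)$. By Proposition~\ref{prop-switches}, every switch is stabbed by at least one line in $L$, so each line can stab at most two switches (a horizontal line stabs at most two arcs, likewise a vertical line). The key association is: assign to each line $\ell \in L$ the set of switches it stabs. If $\ell$ stabs two switches $S_i, S_j$, then by definition $(S_i, S_j)$ is a nice pair, so $(v_i, v_j) \in E_P$; we can charge $\ell$ to this edge. If $\ell$ stabs exactly one switch, we charge it to that vertex. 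First I would formalize this charging and argue that the edges and singly-charged vertices recovered from $L$ together form an edge-cover-like object on $G_P$.

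**Key steps.** First, observe that the switches incident to isolated vertices $I_P$ can never be covered by an edge of $G_P$: by definition an isolated switch faces no other switch, so any line stabbing it stabs no second switch, and hence must be a ``private'' line used exclusively (in the charging sense) for that switch. This gives an immediate lower bound of $|I_P|$ lines that are forced to be distinct and dedicated to the isolated switches. Second, restrict attention to $H_P = G_P \setminus I_P$ and the lines of $L$ that stab the nice (non-isolated) switches. Let $F \subseteq E_P$ be the set of edges $(v_i, v_j)$ such that some line of $L$ stabs both $S_i$ and $S_j$, and let $D$ be the set of nice switches stabbed by some line that stabs no other nice switch. Since every nice switch is stabbed (Proposition~\ref{prop-switches}), every vertex of $H_P$ is covered either by an edge of $F$ or lies in $D$. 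For a vertex in $D$, it has a neighbor in $H_P$ (it is non-isolated), so we may replace its private line's contribution by picking an arbitrary incident edge, turning $F \cup \{\text{edges for }D\}$ into a genuine edge cover of $H_P$. The number of lines used on $H_P$ is at least $|F| + |D|$, which is at least the size of the edge cover we constructed, hence at least $\mathsf{MEC}(H_P)$.

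**Putting it together.** Combining the two parts, the number of lines in $L$ is at least $|I_P| + \mathsf{MEC}(H_P) = \kappa(G_P)$, since the lines charged to isolated switches are disjoint from those charged within $H_P$ (a line stabbing an isolated switch stabs no other switch and in particular no nice switch). This yields $k \geq \kappa(G_P)$ as required.

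**Main obstacle.** The delicate point, and where I would spend the most care, is the accounting when a single line stabs one isolated switch and this does not interfere with the edge-cover bound on $H_P$, together with the possibility that two lines stab the \emph{same} nice pair or that a line stabs two switches one of which is isolated. I would need to verify that the map from lines to edges/singleton-vertices is well-defined and that distinct lines are not double-counted when converting $F \cup D$ into an edge cover. The cleanest formulation is probably to define an injection from a lower-bounding subset of $L$ into an edge-cover of $H_P$ (for the non-isolated part) plus an injection into $I_P$ (for the isolated part), and argue the two injections have disjoint domains. Establishing that every nice switch stabbed by a ``shared'' line yields a valid edge, and that isolated switches genuinely consume dedicated lines, is the crux; the rest is bookkeeping.
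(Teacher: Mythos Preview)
Your proposal is correct and follows essentially the same approach as the paper: classify the lines of $L$ by how many switches they stab, note that two-stab lines yield edges of $G_P$ while one-stab lines must account for all isolated switches together with any nice switches missed by those edges, and extend the resulting edge set to an edge cover of $H_P$. The paper's write-up streamlines the bookkeeping you flag as the main obstacle by working directly with the counts $\alpha$ (lines stabbing two switches), $\beta$ (lines stabbing one switch), and $X$ (nice vertices not covered by the $\alpha$-edges), obtaining $\beta \geq |I_P| + |X|$ and $\mathsf{MEC}(H_P) \leq \alpha + |X|$, which summed give $k \geq \alpha + \beta \geq \kappa(G_P)$ without any explicit injection argument.
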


\begin{proof}
    Consider any solution $L$ that uses $k$ axis-parallel lines. By Proposition~\ref{prop-switches}, we know that every switch must be stabbed by some line from $L$. In particular, suppose there are $\alpha$ lines in $L$ that stab a pair of (nice) switches, and $\beta$ lines that stab one switch. Clearly, $\beta \geq |I_P|$, the number of isolated vertices in $G_P$.
    
     Let $X$ be the set of non-isolated vertices in $G_P$ which are not covered by the edges corresponding to the $\alpha$ lines stabbing pairs of nice switches. Now, note that the switches corresponding to these non-isolated vertices in $X$ must be stabbed by one of the $\beta$ lines. So, $\beta \geq |X|+|I_P|$. Recall that $H_P = G_P \setminus I_P$ and $\mathsf{MEC}(H_P)$ covers every non-isolated vertex of $G_P$. Observe that $\mathsf{MEC}(H_P) \leq \alpha + |X|$, since the edges corresponding to the $\alpha$ lines stabbing pairs of switches can be extended by a collection of $|X|$ edges, one each for each non-isolated vertex that is not accounted for so far, to obtain a MEC for $H_P$. Adding both the inequalities above, we get:
     $$\alpha + \beta + |X| \geq  |X|+|I_P|+ \mathsf{MEC}(H_P) $$ 
     $$ \Rightarrow \alpha + \beta  \geq  |I_P|+ \mathsf{MEC}(H_P)$$
     $$ \Rightarrow k \geq \kappa(G_P),  $$
     
     as desired.
\end{proof}
We now turn to the upper bound.

\begin{lemma}
\label{lem:upperbound-circle} Let $P = (R \cup B)$ be a red-blue point set on a circle. There exists a collection of at most $\kappa(G_P)$ lines that separate $R$ from $B$.
\end{lemma}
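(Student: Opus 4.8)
The plan is to construct an explicit family of $\kappa(G_P) = |I_P| + \mathsf{MEC}(H_P)$ axis-parallel lines and then verify that it separates $R$ from $B$. For the skeleton I would take a minimum edge cover $M$ of $H_P$; recall that a minimum edge cover decomposes into vertex-disjoint stars and that $|M| = \mathsf{MEC}(H_P)$. For every edge $(v_i, v_j) \in M$ I would select one line from the (nonempty, by definition of a nice pair) family of axis-parallel lines that simultaneously stab $S_i$ and $S_j$, and for every isolated vertex of $G_P$ I would select a single axis-parallel line stabbing its switch. This yields exactly $|M| + |I_P| = \kappa(G_P)$ lines, and by construction every switch is stabbed. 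By Proposition~\ref{prop-switches} stabbing all switches is necessary; the content of the lemma is that, with the right choices, it is also sufficient.

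Next I would reduce correctness to a purely cell-based statement: a set of axis-parallel lines separates $R$ from $B$ exactly when no open cell of the arrangement contains two points of $P$ of different colours. Since every switch is stabbed, each maximal sub-arc of $C$ lying inside a single cell is contained in one chunk, hence monochromatic; in particular every pair of points flanking a common switch is already separated. The only remaining way for separation to fail is that a single cell $Q$ meets $C$ in two disjoint arcs of opposite colours. I would isolate this as the crux: if $A \subseteq C$ is red and $A' \subseteq C$ is blue with $A, A' \subseteq Q$, then the sub-arc of $C$ joining them must carry a switch $S$ that exits $Q$, so the line stabbing $S$ lies outside $Q$ and does not separate $A$ from $A'$.

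To rule this out I would argue that the lines can always be chosen, and where necessary locally repaired, so that no cell captures two oppositely-coloured arcs. The tactic is twofold. First, I would exploit the freedom within each family: for an edge $(S_i, S_j)$ there is generally an interval of admissible heights or abscissae, and I would pick the member of the family that hugs the boundary of $C$ as tightly as the facing constraint allows, keeping the arcs it induces short and localised near $S_i$ and $S_j$. Second, I would set up a bounded repair procedure: starting from the MEC-guided family, as long as some cell is bichromatic I replace one offending line by another line of the same orientation through the same switch, or reassign a star's line to a different leaf, while preserving both the count $\kappa(G_P)$ and the property that every switch remains stabbed. I would track a potential such as the number of bichromatic cells (bounded by the number of arcs of $C$ in the arrangement), show each modification strictly decreases it, and conclude that the process halts at a separating family of size $\kappa(G_P)$.

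The main obstacle is precisely this last geometric repair step: showing that each local modification can be carried out \emph{without} creating a fresh bichromatic cell elsewhere, and that it strictly improves the potential so the process terminates without ever spending an extra line. I expect the argument to rest on the star structure of $M$, which confines each modification to a single star and its incident arcs, together with the planarity of the circular order $\sigma$, which constrains how arcs of the two colours can interleave. Establishing that these constraints always leave enough slack to repair a violation in place, rather than by adding a line, is where the real work lies.
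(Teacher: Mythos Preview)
Your starting point and high-level strategy coincide with the paper's: build $L_0$ from a minimum edge cover plus one line per isolated switch, observe that every switch is stabbed, and then iteratively repair bichromatic cells while keeping the line count at $\kappa(G_P)$. Where your proposal diverges, and where there is a genuine gap, is the repair move itself. You propose to replace an offending line by another line \emph{of the same orientation} through the same switch, or to reassign within a star of the cover. The paper's actual move is to \emph{flip the orientation} of a bounding line of the corrupt cell: a horizontal line is rotated into a vertical one (or vice versa) about a corner of the cell. This is the key geometric idea. For a typical corrupt two-arc cell, sliding a horizontal bounding line along its orientation while still stabbing its switch does nothing to separate the two opposite-coloured arcs inside; the flip works because the neighbouring cell (above, say) is already monochromatic, so one arc can be safely exposed to it while the new vertical line now cuts between the two arcs. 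Your star-reassignment move is also problematic as stated, since dropping an edge from a minimum edge cover uncovers a leaf.

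Two further points. First, your claim that the only failure mode is a cell meeting $C$ in \emph{two} arcs of opposite colours is incomplete: a cell can meet $C$ in three or four arcs (the paper calls this a \emph{large} cell), there is at most one such cell, and it is handled by a separate case analysis that your outline does not anticipate. Second, the paper's termination measure is not the number of bichromatic cells but the set $\mathsf{sep}(L)$ of separated red-blue pairs, which strictly grows at each step (bounding the number of iterations by $rb$); this sidesteps having to argue that a repair never creates a fresh bichromatic cell elsewhere, which is exactly the difficulty you flag as the main obstacle.
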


The proof of the upper bound is algorithmic, and we demonstrate it with a series of claims. To begin with, let $F_P \subseteq E_P$ be a MEC of $G_P$ and let $t := |I_P|$. We define a set of lines $L_0$ as follows. For every edge $e = (v_i,v_j) \in F_P$, let $\ell_e$ be an arbitrary axis-parallel line passing through the switches $S_i$ and $S_j$. For every isolated switch $S_r$, let $\ell_r$ be an arbitrary axis-parallel line stabbing $S_r$. Now define $L_0$ as the collection of all of these lines, i.e:

$$L_0 = \{\ell_e ~|~ e \in F_P\} \cup \{\ell_r ~|~ v_r \in I_P \}.$$

Note that $|L_0| = \kappa(G_P)$. If $L_0$ separates $R$ from $B$, then we are done. Otherwise, we will obtain another set of axis-parallel lines that ``dominates'' $L_0$ in that it has the same size as $L_0$, separates all pairs of points separated by $L_0$ and at least one additional pair. To formalize this, we introduce the notion of a strictly dominating solution. For a set of lines $L$, let $\mathsf{sep}(L) \subseteq R \times B$ denote the set of red-blue point pairs that are separated by $L$. Given two collections of axis-parallel lines $L$ and $L^\star$, we say that $L^\star$ \emph{strictly} dominates $L$ if $|L^\star| \leq |L|$ and $\mathsf{sep}(L) \subsetneq \mathsf{sep}(L^\star)$. We will now show that there exists a sequence of sets of axis-parallel lines $L_0, L_1, \ldots, L_g$ such that $L_i$ strictly dominates $L_{i-1}$ for all $1 \leq i \leq g$ and $L_g$ is separates $R$ from $B$. Note that the number of steps is bounded by $rb$. Throughout, we will maintain the invariant that every switch is stabbed by at least one line. Note that this is true, in particular, for $L_0$.

\begin{claim}
    Every switch is stabbed by at least one line from $L_0$.
\end{claim}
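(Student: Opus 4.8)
The plan is to prove this claim directly by unpacking how the set $L_0$ was constructed, and matching each switch to a line that stabs it. Recall that every vertex $v_j$ of $G_P$ is either isolated (i.e., $v_j \in I_P$) or non-isolated (i.e., $v_j \in V_P \setminus I_P$). Since $F_P$ is an edge cover of the subgraph $H_P = G_P \setminus I_P$, every non-isolated vertex is an endpoint of at least one edge in $F_P$; and for every isolated vertex we have added a dedicated line. So the two parts of the definition of $L_0$ together account for every vertex, hence every switch.

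First I would handle the non-isolated switches. Let $S_j$ be a switch whose corresponding vertex $v_j$ is non-isolated. Because $F_P$ is an edge cover of $H_P$, there is some edge $e = (v_i, v_j) \in F_P$ incident to $v_j$. By construction, $L_0$ contains the line $\ell_e$, which was chosen to be an axis-parallel line passing through both switches $S_i$ and $S_j$. In particular, $\ell_e \cap S_j \neq \emptyset$, so $S_j$ is stabbed by $\ell_e \in L_0$, as required.

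Next I would handle the isolated switches. Let $S_r$ be a switch whose corresponding vertex $v_r$ is isolated, i.e., $v_r \in I_P$. By construction, $L_0$ contains the line $\ell_r$, an arbitrary axis-parallel line chosen so as to stab $S_r$; thus $\ell_r \cap S_r \neq \emptyset$ and $S_r$ is stabbed by $\ell_r \in L_0$. Since every switch falls into exactly one of these two cases, every switch is stabbed by at least one line from $L_0$.

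I do not anticipate a genuine obstacle here: the claim is essentially a restatement of the fact that $F_P$ covers $H_P$ together with the coverage of isolated vertices built into the definition of $L_0$. The only point requiring a moment of care is to confirm that an edge cover (as opposed to merely a matching) is what guarantees \emph{every} non-isolated vertex — not just a maximum-matching subset — is incident to a chosen edge; this is exactly why $F_P$ was taken to be a minimum edge cover rather than a matching, and it is what makes the coverage of the non-isolated part complete.
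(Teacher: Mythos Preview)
Your proof is correct and follows essentially the same approach as the paper: both split into the cases of isolated and non-isolated switches, using the dedicated line $\ell_r$ for the former and the edge-cover property of $F_P$ to find an incident edge $e$ (and hence the line $\ell_e$) for the latter. The paper's version is simply terser.
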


\begin{proof}
    If the switch, say $S_r$, is isolated, then it is stabbed by the line $l_r \in L_0$, corresponding to the isolated vertex $v_r \in G_P$. Any other switch $S_i$ corresponds to a vertex in $G$ which is the endpoint of at least one edge $e$ in the MEC and the switch is stabbed by the line $\ell_e$.  
\end{proof}


For a collection of axis-parallel lines $L$, we say that a cell of $L$ is \emph{corrupt} if it is non-monochromatic, that is, if it contains at least one red point and at least one blue point. Note that $L_0$ contains at least one corrupt cell, otherwise we would be done. We consider all the possible ways in which a cell can intersect the circle underlying our point set. 

\begin{claim}
    Let $\mathcal{R}$ be an axis-parallel rectangle and let $C$ be a circle centered at the origin. Then $\mathcal{R} \cap C$ is either empty or consists of at most four disjoint arcs of $C$. 
\end{claim}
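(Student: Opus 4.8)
The plan is to count the points where the circle $C$ meets the boundary of $\mathcal{R}$ and to use the fact that these are exactly the places where $C$ can pass between the inside and the outside of the rectangle. First I would fix notation by writing $\mathcal{R} = \{(x,y) : x_{\min} \le x \le x_{\max},\ y_{\min} \le y \le y_{\max}\}$, so that $\partial\mathcal{R}$ is contained in the union of the four axis-parallel lines $x = x_{\min}$, $x = x_{\max}$, $y = y_{\min}$, $y = y_{\max}$. The quantitative input is elementary: each of these four lines is a straight line, so it meets $C$ in at most two points, and hence $C \cap \partial\mathcal{R} \subseteq C \cap (\text{the four lines})$ consists of at most $8$ points. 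Next I would note that $\mathcal{R}\cap C$ is a relatively closed subset of $C$, so it is a disjoint union of arcs (its connected components), and that the endpoints of any such arc must lie on $\partial\mathcal{R}$, since immediately beyond an endpoint the circle leaves $\mathcal{R}$.

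To pass from ``at most $8$ boundary points'' to ``at most $4$ arcs,'' the key step is a parity/alternation argument. As one traverses $C$ once, the trajectory alternates between maximal arcs lying inside $\mathcal{R}$ and maximal arcs lying outside $\mathcal{R}$, and every transition occurs at a point of $C \cap \partial\mathcal{R}$. Since $\partial\mathcal{R}$ is a closed curve, the number of transversal crossings made by the loop $C$ is even, say $2s$ with $2s \le 8$; these crossings delimit exactly $s$ inside-arcs and $s$ outside-arcs, whence the number of inside-arcs is $s \le 4$. Equivalently, each inside-arc is delimited by two distinct points of $C \cap \partial\mathcal{R}$, distinct components use distinct endpoints, and $2 \cdot (\text{number of arcs}) \le 8$.

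The main obstacle is the degenerate configurations in which $C$ meets a bounding line tangentially or passes through a corner of $\mathcal{R}$, since then a boundary point need not be a genuine transition between the inside and the outside and the clean alternation can fail. I would dispose of these at the outset: if the whole circle lies in $\mathcal{R}$, then $\mathcal{R}\cap C = C$ is a single closed arc and the bound holds trivially; otherwise, at any non-transversal contact the two neighbouring arcs lie on the same side of $\partial\mathcal{R}$ and may be merged, which only \emph{decreases} the number of components. Thus the transversal case analysed above is the worst case, and the bound of four arcs follows in general. (The concentric example of a unit circle with the square $[-0.9,0.9]^2$, whose intersection is four diagonal arcs, shows the bound is tight, though tightness is not needed for the claim.)
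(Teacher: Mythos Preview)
Your proof is correct and takes a genuinely different route from the paper's. The paper argues by a geometric case analysis on how many of the four lines extending the sides of $\mathcal{R}$ are secants of $C$ (none, one, two opposite, two adjacent, three, four), and in each case describes the possible arcs directly. Your argument is instead a counting/alternation argument: bound $|C\cap\partial\mathcal{R}|$ by $8$ via the four-lines-times-two-points estimate, then observe that a closed traversal of $C$ alternates between inside and outside $\mathcal{R}$, so the number of inside-arcs is at most half the number of boundary crossings. Your approach is more systematic and handles all configurations uniformly, at the cost of needing a short remark about tangencies and corners (which you supply). The paper's approach is more pictorial and makes the extremal configurations explicit, but is somewhat ad hoc and its case enumeration is terse. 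Either way the claim is elementary; your version would transplant without change to any convex region whose boundary meets each axis-parallel line in at most two points.
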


\begin{proof}
    Suppose $\mathcal{R}$ is an axis-parallel rectangle that lies completely inside the circle $C$ such that it does not intersect the boundary of $C$. Then, since all the points P lie on the boundary of $C$, so, $\mathcal{R} \cap C$ is empty. Otherwise, for $\mathcal{R}$ to intersect $C$ non-trivially, at least one arc must be contained inside $\mathcal{R}$. 
    Now, if $\mathcal{R}$ is such that all of its four axis-parallel lines form a secant of the circle, then there can be at most four disjoint arcs of $C$, captured by the four right-angles of the rectangle.
    
    More explicitly, if exactly one line of $\mathcal{R}$ is a secant, then $\mathcal{R} \cap C$ consists of at most two disjoint arcs (wedged by the secant and the two perpendiculars to it). If two opposite lines of $\mathcal{R}$ are secants, then again there are two disjoint arcs, wedged between those two secants. If two adjacent lines are secants, then there can be three arcs wedged out at the three right angles corresponding to these two adjacent secants. And finally, if there are three secants, then the remaining fourth line is forced either to be a chord or a secant, in which case there are either two or four disjoint arcs respectively. In any case, there can never be more than four disjoint arcs contained in $\mathcal{R} \cap C$. 
\end{proof}

Next, we note that any corrupt cell must contain at least two disjoint arcs of the circle.

\begin{claim}
    Let $L$ be a set of lines that stabs every switch at least once, and let $\mathcal{R}$ be a corrupt cell of $L$. Then $\mathcal{R} \cap C$ contains at least two disjoint arcs of the circle $C$. Further, all points that appear together on any arc of $\mathcal{R} \cap C$ are of the same color. 
\end{claim}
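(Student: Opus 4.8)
The plan is to establish the ``further'' (monochromaticity) statement first, since the existence of two disjoint arcs will then follow immediately. So first I would fix an arc $A$ of $\mathcal{R} \cap C$, i.e.\ a single connected component of the intersection of the cell $\mathcal{R}$ with the circle $C$, and argue that any two points of $P$ lying on $A$ share a color. Suppose not: let $p, q \in P$ lie on $A$ with $\mathsf{col}(p) \neq \mathsf{col}(q)$. Traversing $C$ from $p$ to $q$ along $A$, the color of the points we encounter changes at least once, so at least one switch $S_i$ lies strictly between $p$ and $q$ on the arc $A$.

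The key observation is that the entire sub-arc of $A$ between $p$ and $q$ is contained in $\mathcal{R}$, because $A$ is a connected component of $\mathcal{R} \cap C$. In particular, the switch $S_i$ sits in the interior of $\mathcal{R}$. But by hypothesis $S_i$ is stabbed by some line $\ell \in L$, so $\ell$ meets the interior of $\mathcal{R}$ at a point strictly between $p$ and $q$. This contradicts the fact that a cell of $L$ is, by definition, a maximal region whose interior is crossed by no line of $L$. Hence no switch can separate two points of the same arc, and every arc of $\mathcal{R} \cap C$ is monochromatic.

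With monochromaticity in hand, the first part is short. Since $\mathcal{R}$ is corrupt, it contains at least one red point and at least one blue point, so $\mathcal{R} \cap C$ is nonempty. If $\mathcal{R} \cap C$ consisted of a single arc, that arc would have to contain both a red and a blue point, contradicting the monochromaticity just established. Therefore $\mathcal{R} \cap C$ must consist of at least two disjoint arcs, as claimed.

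I expect the main obstacle to be the careful handling of the first step: one must be precise that ``$\ell$ stabs $S_i$'' forces $\ell$ to pass through a point of the \emph{open} sub-arc lying strictly between $p$ and $q$ (and hence through the interior of $\mathcal{R}$), rather than merely touching the boundary of the cell. This hinges on $S_i$ lying strictly between $p$ and $q$ along $A$, together with our strict-separation convention that no point of $P$ lies on a separating line, which guarantees the stabbing point is genuinely interior to the cell.
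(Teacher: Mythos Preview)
Your proof is correct and follows essentially the same approach as the paper: both argue that any arc of $\mathcal{R}\cap C$ containing points of both colors would force a switch to lie entirely inside the cell, contradicting the hypothesis that every switch is stabbed. The only difference is order---you establish monochromaticity first and derive the two-arc conclusion as a corollary, whereas the paper treats the two-arc claim directly (via the same contradiction) and then remarks that monochromaticity follows by the same reasoning.
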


\begin{proof}
    Since $\mathcal{R}$ is a corrupt cell, it must contain at least one red and one blue point. So $\mathcal{R} \cap C$ is non-empty and contains at least one arc of $C$. Now, suppose $\mathcal{R} \cap C$ contains exactly one arc, which must now contain at least one red and one blue point. Observe that this implies that there is at least one switch $S$ that will be completely contained in the single arc of $\mathcal{R} \cap C$. However, this in turn implies that $S$ is not stabbed by any line from $L$, violating the assumption that every switch is stabbed by at least one line. Therefore, if $\mathcal{R}$ is a corrupt cell in a set of lines that stab every switch, we conclude that $\mathcal{R} \cap C$ contains at least two disjoint arcs. The second part of the claim follows, again, from the fact that the cell cannot contain a switch entirely. 
\end{proof}
We say that a cell $\mathcal{R}$ is \emph{large} if $\mathcal{R} \cap C$ contains three or four disjoint arcs. We now note that any instance can have at most one large cell.

\begin{claim}
    Any set of lines $L$ has at most one large cell. 
\end{claim}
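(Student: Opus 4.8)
The plan is to show that every large cell must contain the center of the circle in its closure; since the cells of an axis-parallel arrangement have pairwise disjoint interiors and the center is a single point, this immediately forces at most one large cell.

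First I would record the geometric feature that controls the arc count. The circle $C$ has four \emph{turning points} $N=(0,1)$, $E=(1,0)$, $S=(0,-1)$, $W=(-1,0)$, where its tangent is axis-parallel, and these split $C$ into four \emph{quadrant arcs}, one in each closed quadrant. On each quadrant arc the circle is strictly monotone in both coordinates (for instance, on the arc from $E$ to $N$ the $x$-coordinate strictly decreases while the $y$-coordinate strictly increases). The key sub-fact is that such a doubly-monotone arc meets any axis-parallel rectangle in at most one connected sub-arc: writing the arc as the graph of a monotone function $y=f(x)$, its intersection with $[a,b]\times[c,d]$ is indexed by $\{x \in [a,b] : f(x)\in[c,d]\}$, an intersection of two intervals and hence connected.

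Next I would set up a dichotomy on the position of the origin. Let $\mathcal R = [a,b]\times[c,d]$ be a cell, and suppose the origin is not contained in the closure of $\mathcal R$. Then at least one of the conditions $a\le 0\le b$ and $c\le 0\le d$ fails, so $\mathcal R$ lies entirely inside one coordinate half-plane, say $\{x \le 0\}$ (the remaining cases being symmetric). The intersection $C \cap \{x\le 0\}$ is a semicircle whose only interior turning point is $W$, and is therefore the union of exactly two quadrant arcs. By the sub-fact, $\mathcal R$ meets each of these in at most one sub-arc, so $\mathcal R \cap C$ consists of at most two arcs and $\mathcal R$ is not large. Taking the contrapositive, every large cell contains the origin in its closure. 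This confinement-to-a-half-plane step, together with the monotonicity bound, is where the real work lies; the rest is bookkeeping.

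Finally I would conclude. If the origin lies on no line of $L$, it belongs to the interior of a unique cell and to the closure of no other cell, so only that cell can be large. If the origin lies on a line of $L$, then each incident cell is contained in a \emph{closed} coordinate half-plane and hence, by exactly the argument above, has at most two arcs, so in fact no cell is large. In every case at most one cell is large. The main obstacle I anticipate is handling the degenerate configurations cleanly---where the origin lies on a separating line, or where $C$ is tangent to $\partial\mathcal R$ at a turning point---and I would lean on the half-plane formulation precisely because it remains valid in these boundary cases rather than counting transversal crossings directly.
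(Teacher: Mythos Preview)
Your argument is correct and follows the same overall strategy as the paper: show that every large cell must contain the center of the circle, then use disjointness of cells to conclude. The difference lies in the sub-argument for why a cell avoiding the center meets $C$ in at most two arcs. The paper argues by counting secants: a cell lying to one side of a diameter can have at most two opposite sides that are secants of $C$, and then appeals to its earlier case analysis relating the number of secant sides to the number of arcs. Your version instead observes that each of the four quadrant arcs of $C$ is monotone in both coordinates and hence meets any axis-parallel rectangle in at most one sub-arc, so a cell confined to a coordinate half-plane (which touches only two quadrant arcs) meets $C$ in at most two arcs. Your monotonicity argument is cleaner and more self-contained, and your explicit treatment of the degenerate case where the origin lies on a line of $L$ is more careful than the paper's; the paper's secant-counting is terser but leans on the preceding claim.
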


\begin{proof}
    We claim that every large cell must contain the center of the circle. If not, then the cell can not contain any line passing through the centre, forcing it to completely lie at either side of the diagonal. In that case, the cell has at most two secants (corresponding to the two opposite lines of the cell) and hence, can not contain more than two disjoint arcs, contradicting the fact that it is a large cell. Therefore, a large cell must contain the centre. If there are two or more large cells, then they must overlap at the centre. But clearly, two cells can never overlap, else they will break into three or more smaller cells. Hence, any set of lines $L$ has at most one large cell.
\end{proof}

We are now ready to describe the high-level strategy for obtaining a sequence of strictly dominating solutions. It turns out that if a corrupt cell consists of exactly two disjoint arcs, then depending on the ``location'' of the cell, there is a simple strategy that allows us to clean up the cell by flipping the orientation of one of the lines in the solution. In particular, and informally speaking, the strategy works for corrupt cells that are ``above'' (``below'') the origin if all cells above it are monochromatic, or corrupt cells ``to the left'' (``to the right'') the origin if all cells before (after) it are monochromatic. This gives us a natural sweeping strategy to clean up corrupt cells from four directions, while potentially getting stuck at a large cell ``at the center''. When the large cell is the only corrupt one, it turns out that there are a fixed number of configurations it can have when considered along with its surrounding cells, and for each of these cases, we suggest an explicit strategy to clean up the large cell to arrive at a solution with no corrupt cells at all. We now formalize this argument.

\begin{center}
    \begin{figure}[htbp]
        \includegraphics{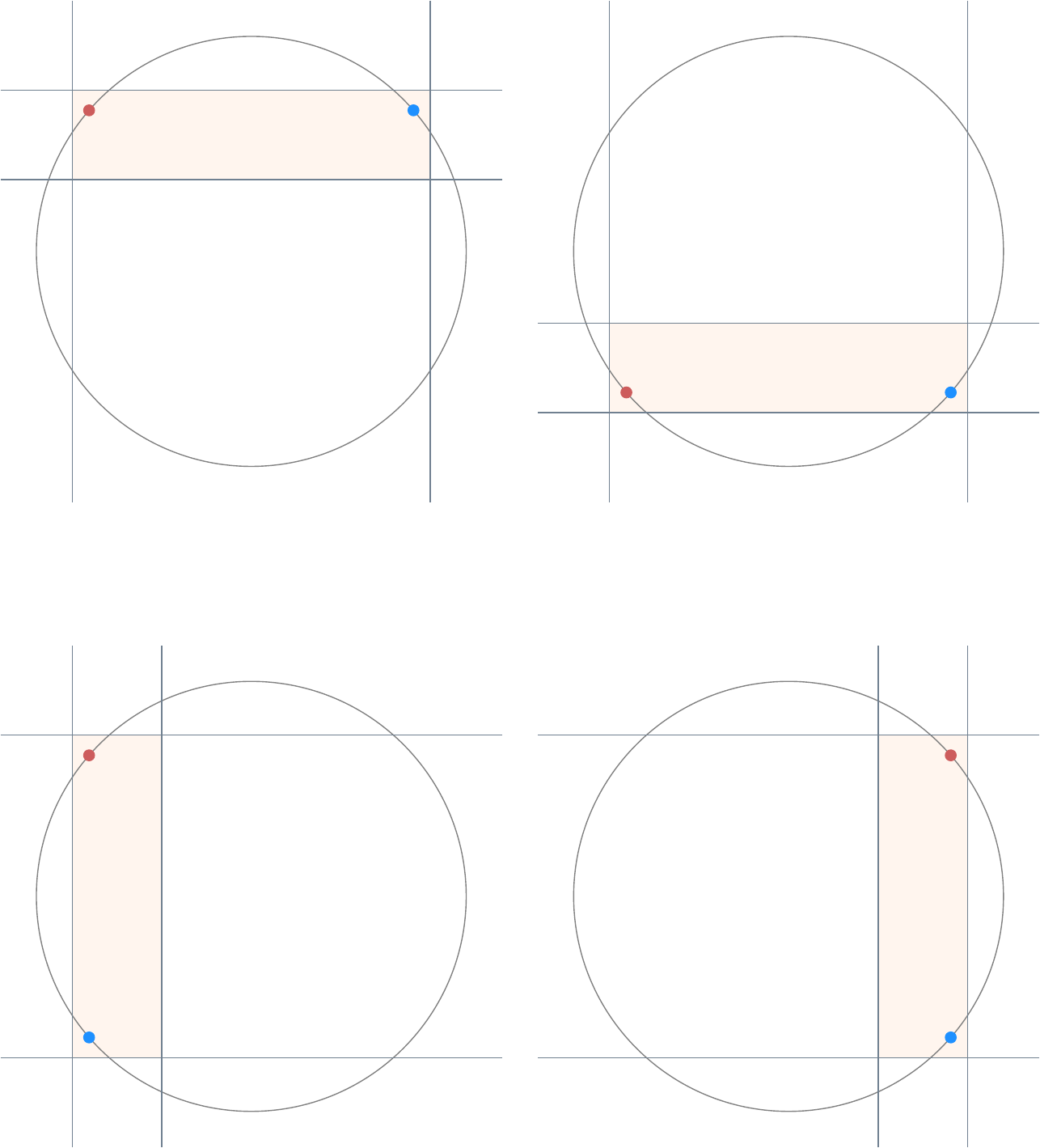}
    \caption{This figure demonstrates the possibilities for corrupt cells that intercept the underlying circle at exactly two arcs. Note that the choice of red and blue points are just specific examples, and the opposite scenarios may also arise.}
    \label{fig:twoarcstypes}
    
    \end{figure}
    \end{center}

Let $L$ denote the current solution: to begin with, $L$ is $L_0$, and we describe a process to obtain a solution $L'$ that strictly dominates $L$ if $L$ is not already a valid solution. Note that $L$ divides the plane into vertical and horizontal strips, which we will refer to as the rows and columns of the solution. Also, we call a cell of this configuration \emph{empty} if it does not contain any points of $P$. We first focus on corrupt cells that are \emph{not} large. Consider a cell $\mathcal{R}$ whose intersection with $C$ consists of exactly two disjoint arcs, say $A_1$ and $A_2$. Note that $A_1$ and $A_2$ lie in distinct quadrants. We call $\mathcal{R}$ a horizontal cell if these arcs lie in the first and second or the third and fourth quadrants; and we call $\mathcal{R}$ a vertical cell if these arcs lie in the first and fourth or the second and third quadrants. Note that the remaining possibilities do not arise with cells that are not large. We refer the reader to Figure~\ref{fig:twoarcstypes} for a visual representation of these cases. 

Consider the corrupt horizontal cell whose center has the largest $y$-coordinate in absolute value. This is either the top-most corrupt cell above the $x$-axis (Case 1) or the bottom-most corrupt cell below the $x$-axis (Case 2). If there are no corrupt horizontal cells, then consider the corrupt vertical cell whose center has the largest $x$-coordinate in absolute value. This is either the left-most corrupt cell to the left of the $y$-axis (Case 3) or the right-most corrupt cell to the right of the $y$-axis (Case 4). 

Let us consider Case 1. Here, observe that any row above the row containing the cell $\mathcal{R}$ consists of at most one non-empty cell and that all such cells are monochromatic by the choice of $\mathcal{R}$. Now, if the cell above $\mathcal{R}$ is monochromatic red and the arc in the first (second) quadrant consists of red points, then the top line of $\mathcal{R}$ can be flipped to a vertical line about the top-left (top-right) corner of the cell $\mathcal{R}$. It is easy to check that this solution strictly dominates $L$. The case when the cell above $\mathcal{R}$ is monochromatic blue can be argued similarly. We refer the reader to Figure~\ref{fig:twoarcsfix} for a demonstration of the switching strategies in these scenarios. 

\begin{center}
    \begin{figure}[htbp]
        \includegraphics{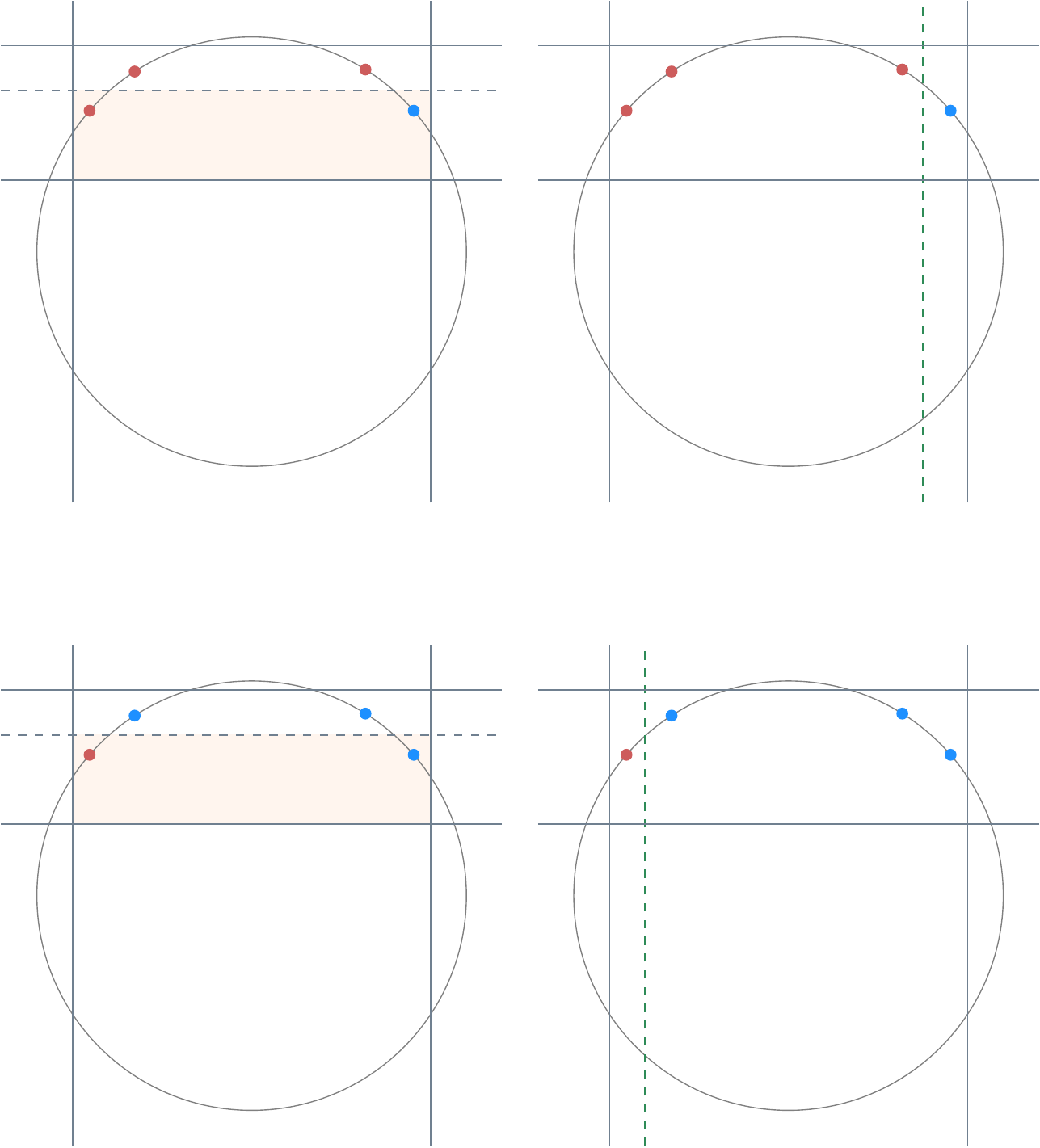} 
    \caption{This figure demonstrates the strategy for fixing a corrupt cell from Case 1. The shaded boxes on the figures on the left are the corrupt cells under consideration. The dotted line on the left is the line to be flipped, and the green dashed line on the right shows the line in its new flipped state: the choice of the flip depends on the state of the cell above and the colors of the points on the two arcs in the corrupt cell. The key observation is that if the cell above is monochromatic then there is at least one arc from the corrupt cell that points from the cell above can be safely ``exposed'' to, and the flip takes advantage of this while simultaneously fixing the corrupt cell. The other cases can be handled analogously.}
    \label{fig:twoarcsfix}
    
    \end{figure}
\end{center}
    
Case 2 is similar to Case 1 except that we argue relative to the cells below $\mathcal{R}$ rather than above it. In Cases 3 and 4, we flip vertical lines to a horizontal orientation, and the argument is based on monochromatic cells that lie to the left and right of $\mathcal{R}$, respectively. All the details are analogous to the case that we have discussed. Therefore, as long as the current solution has a corrupt cell that is not large, this discussion enables us to find a strictly dominating solution. 

Now, the only case that remains is the situation where we have exactly one corrupt cell which is large. For a large cell we have four surrounding monochromatic or empty cells. The three or four arcs contained inside the large cell may also have red or blue points in different configurations. It turns out that each of these cases admits a new solution which makes all cells monochromatic. This can be established by inspection, and we refer the interested reader to the supplementary material that goes over all the individual cases\footnote{All the cases can be downloaded from~\url{http://neeldhara.com/files/aprbs-cases.pdf}.}. Meanwhile, we refer the reader to Figure~\ref{fig:largebox} for a stereotypical case and the corresponding strategy. Based on this discussion, we conclude with the formal statement of the main result of this section. 

\begin{theorem}
If $P$ is a set of red and blue points on a circle, then \textsf{RBS} and \textsf{APRBS} can be solved in polynomial time. 
\end{theorem}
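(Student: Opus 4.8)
The plan is to show that in each setting the optimum number of lines has a closed-form combinatorial description that is evaluable in polynomial time, so that the decision problem reduces to comparing this value against the input bound $k$. Both algorithms begin with the same preprocessing: sort $R \cup B$ by angle around $C$, read off the chunk decomposition $C_1, \ldots, C_w$ and the switches $S_1, \ldots, S_w$, which takes $O(n \log n)$ time.

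For \textsf{RBS} I would argue that the optimum is exactly $w/2$. The lower bound is immediate from Proposition~\ref{prop-switches}: every switch must be stabbed, a single line stabs at most two switches, and $w$ is even, so at least $w/2$ lines are needed. The matching upper bound is the explicit construction from the subsection on general lines, which introduces one line per blue chunk and therefore uses exactly $w/2$ lines. Hence the instance is a yes-instance if and only if $k \geq w/2$, a test that is dominated by the $O(n \log n)$ preprocessing.

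For \textsf{APRBS} the two bounds are no longer trivially equal, but Lemma~\ref{lem:lowerbound-circle} and Lemma~\ref{lem:upperbound-circle} together pin the optimum to $\kappa(G_P) = |I_P| + \mathsf{MEC}(H_P)$, so it suffices to compute $\kappa(G_P)$ efficiently. I would first build the auxiliary graph $G_P$ explicitly: each of the $w \le n$ switches is an arc of $C$, and whether a pair $(S_i, S_j)$ is nice is decided in $O(1)$ from the arc endpoints by testing for the existence of a common horizontal or vertical stabbing line; iterating over all $O(w^2)$ pairs is polynomial. From $G_P$ one reads off the isolated set $I_P$ and the subgraph $H_P = G_P \setminus I_P$, and a minimum edge cover of $H_P$ is obtained by extending a maximum matching, as already noted. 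Thus $\mathsf{MEC}(H_P)$, and hence $\kappa(G_P)$, is computable in polynomial time, and the instance is a yes-instance if and only if $k \geq \kappa(G_P)$.

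The conceptual work is entirely carried by the two lemmas; the theorem is a matter of verifying that each ingredient of the bound is efficiently extractable. The only place that needs a short argument is the facing-pair test used to assemble $G_P$ --- one must confirm that the existence of an axis-parallel line meeting two given arcs is captured by a constant-time condition on their endpoints --- and I expect this (minor) geometric check to be the main obstacle. Finally, I would remark that the argument is constructive: the proof of Lemma~\ref{lem:upperbound-circle} repairs $L_0$ through at most $rb$ strictly dominating steps, each computable in polynomial time, so an optimal family of lines (not merely its size) is produced in both settings.
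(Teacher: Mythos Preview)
Your proposal is correct and follows essentially the same route as the paper: the theorem is stated there as an immediate corollary of the preceding discussion, namely that the optimum equals $w/2$ for \textsf{RBS} and $\kappa(G_P)$ for \textsf{APRBS} (via Lemmas~\ref{lem:lowerbound-circle} and~\ref{lem:upperbound-circle}), with both quantities computable in polynomial time. Your only additions are the explicit observation that the nice-pair test for building $G_P$ is a constant-time check on arc endpoints and the remark on the constructive $rb$-step repair; these are implicit in the paper and do not constitute a different approach.
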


\begin{center}
    \begin{figure}[htbp]
        \centering
        \includegraphics{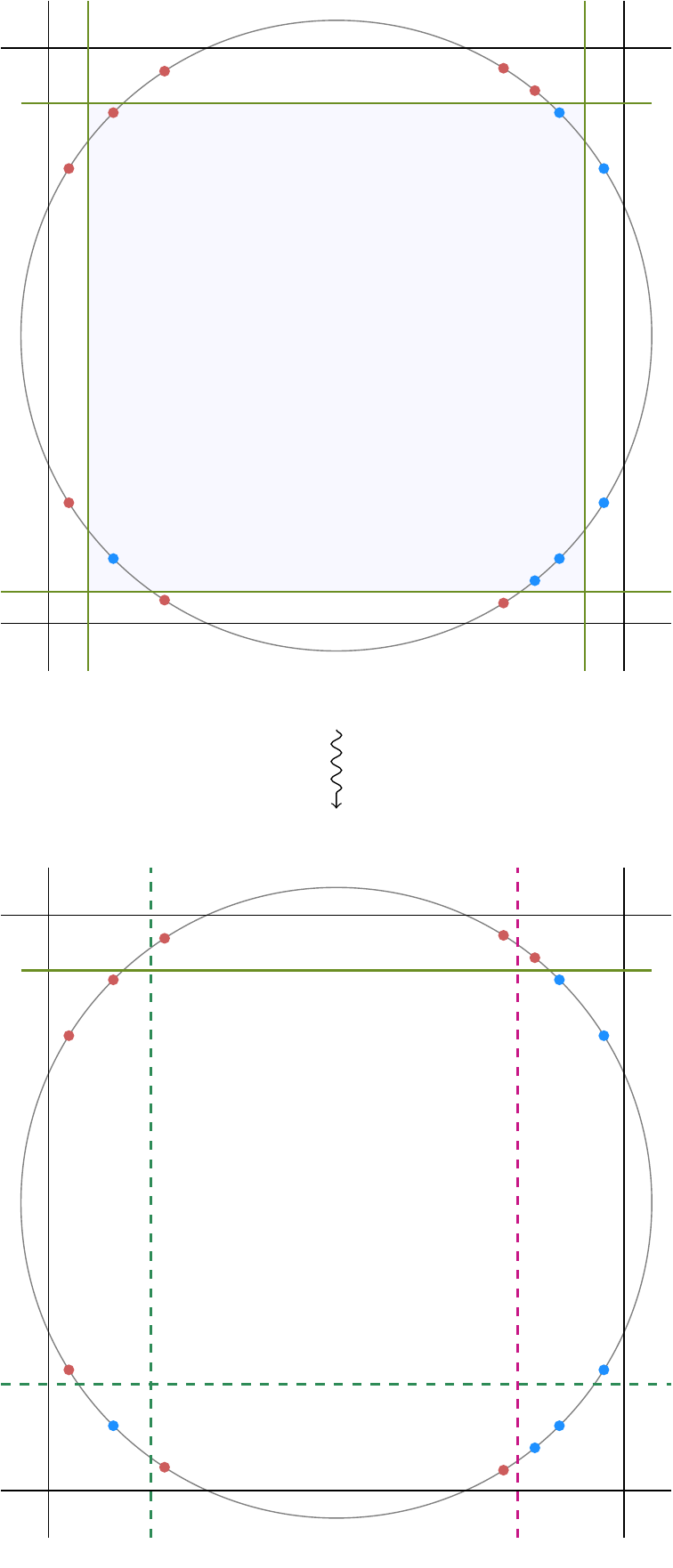} 
        \caption{This figure demonstrates demonstrates a typical scenario with a large corrupt cell. In this particular example, the rightmost line of the large corrupt cell is shifted inwards till it merges the two blue chunks on the top-right and bottom-right of the large cell with the monochromatic blue cell on the right. The extent to which this line has to be moved is determined based on the lengths of the chunks. After this, at least one of the top or bottom lines of the large cell will become partially redundant in that it can be removed without any damage in the first and fourth quadrants. Also, because the top-left chunk of the large cell has points whose color is the same as the adjoining cells on the top and the left, it is safe to flip the left boundary line of the large cell. Together with the redundant line, these two lines can safely form a ``wedge'' for the blue chunk on the bottom right corner of the large cell.}
        \label{fig:largebox}
        \end{figure}
    \end{center}

\section{W-hardness of (p,q)-Separation}
\label{sec:whard}

In this section, we focus on the $(p,q)$-\textsf{APRBS} problem. Before describing our result, we briefly comment on the relationship between this problem parameterized by only the budget for horizontal lines (or vertical lines, by symmetry) and \textsf{APRBS} parameterized by the size of the entire solution. If \textsf{APRBS} had turned out to be W[1]-hard or W[2]-hard parameterized by $k$, then it would imply that $(p,q)$-\textsf{APRBS} is unlikely to be \textsf{FPT} parameterized by either $p$ or $q$, since such an algorithm can be used as a black box to resolve the former question with only a polynomial overhead (guess $p,q$ such that $p+q = k$). On the other hand, if $(p,q)$-\textsf{APRBS} turns out to be \textsf{FPT} parameterized by either $p$ or $q$, then this would imply that \textsf{APRBS} is also \textsf{FPT} for the same reason. We show that $(p,q)$-\textsf{APRBS} is W[2]-hard when parameterized by $p$, the number of horizontal lines used in the solution. Therefore, our observation here establishes the hardness of the problem for a smaller parameter, and it does not have any direct implications for \textsf{APRBS}. Our result also is also not implied by what is known about \textsf{APRBS}, since it turns out that the problem is \textsf{FPT} parameterized by $k$.

We reduce from the \textsc{Colorful Red-Blue Dominating Set} (\textsf{C-RBDS}) problem,which is defined as follows. The input is a bipartite graph $G = (R,B)$ along with a partition of $R$ into $k$ parts $R_1 \uplus \cdots \uplus R_k$. The question is if there exists a subset $S \subseteq R$ such that $|R_i \cap S| = 1$ for all $1 \leq i \leq k$ and that dominates every vertex in $B$; in other words, for all $v \in B$, there exists a $u \in S$ such that $(u,v) \in E(G)$. Such a set is called a colorful red-blue dominating set for the graph $G$. This problem is well-known to be W[2]-hard when parameterized by $k$~\cite{CyganEtAl}.  Our reduction is inspired by the reduction from SAT used to show the hardness of the problem of separating $n$ points from each other~\cite{CalinescuDKW05}. One aspect that is specific to our setting is ensuring that the budget for lines in one orientation is controlled as a function of the parameter. 

\begin{theorem}$(p,q)$-\textsf{APRBS} is W[2]-hard when parameterized by $p$.
\end{theorem}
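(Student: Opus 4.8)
The plan is to construct a polynomial-time reduction from \textsf{C-RBDS}, parameterized by $k$, to $(p,q)$-\textsf{APRBS}, parameterized by $p$, so that $p$ ends up being a function of $k$ alone (ideally $p = k$ or $p = O(k)$) while $q$ is allowed to be polynomially large. Given an instance of \textsf{C-RBDS} with bipartite graph $G = (R \uplus B, E)$ and a partition $R = R_1 \uplus \cdots \uplus R_k$, the idea is to encode the selection of one red vertex from each color class $R_i$ by the placement of a single horizontal line within a dedicated horizontal ``band'' for class $i$. Each vertex $u \in R_i$ is assigned a distinct $y$-coordinate inside band $i$, and a horizontal line placed in that band corresponds to committing to the vertex whose slot is crossed. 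The budget of $p = k$ horizontal lines then forces exactly one choice per color class, realizing the ``colorful'' constraint. The vertical lines, of which we are generous with the budget $q$, are used to carve out the columns that encode the domination requirement for each $v \in B$.

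The key steps, in order, are as follows. First I would lay out $k$ horizontal bands stacked along the $y$-axis, one per color class, and within band $i$ place a gadget for each vertex of $R_i$; I would also lay out one column per blue vertex $v \in B$ (plus separator columns), using the vertical budget to fix all column boundaries so the only genuine freedom left to the solver is the vertical position of the at-most-$k$ horizontal lines. Second, I would design the red/blue point placement inside each cell so that the constraint ``$v$ must be dominated'' translates into ``at least one horizontal line must pass at a height corresponding to some neighbor of $v$.'' Concretely, in the column for $v$, I would place a blue point (or a red--blue pair forming a monochromatic-violating configuration) at the height slot of each $u \in R_i$ that is a neighbor of $v$, arranged so that a horizontal line in band $i$ separates the offending pair exactly when it is placed at the slot of a neighbor of $v$. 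Third, I would argue correctness in both directions: a colorful red-blue dominating set $S$ yields a choice of one height per band (the chosen vertex's slot), giving $k$ horizontal lines that, together with the fixed vertical lines, separate all pairs; conversely, any separating solution with at most $k$ horizontal lines must place at least one line in each band (to handle intra-band conflicts I would plant) and these lines read off a colorful set that dominates every $v$.

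The main obstacle I anticipate is the \emph{rigidity} of the construction: I must ensure that the $q$ vertical lines are essentially forced — that no clever solution can substitute a vertical line for a needed horizontal line or vice versa, which would decouple $p$ from $k$ and break the reduction. This is where the geometric coordinate bookkeeping becomes delicate. The standard trick, which I would adopt, is to insert auxiliary ``enforcer'' point pairs (monochromatic conflicts that can only be resolved by a line at a prescribed position) to pin down every vertical line and to guarantee that each band truly needs a horizontal line; since the $q$ budget is slack, enforcing the vertical lines is cheap, and the crux is enforcing that horizontal lines cannot be saved across bands. A secondary subtlety is that the reduction from \textsf{C-RBDS} (rather than plain \textsf{RBDS}) is what lets me tie exactly one line per band to exactly one vertex per color class; I would exploit the colorful structure precisely to argue that $p = k$ horizontal lines suffice if and only if a colorful dominating set exists, being careful that the \textsf{W[2]}-hardness of \textsf{C-RBDS} parameterized by $k$ transfers cleanly to \textsf{W[2]}-hardness of $(p,q)$-\textsf{APRBS} parameterized by $p$.
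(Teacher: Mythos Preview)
Your overall architecture matches the paper's: reduce from \textsf{C-RBDS}, one horizontal band per color class, one column per $v\in B$, a horizontal line in band $i$ signals the chosen vertex of $R_i$, and functional red--blue pairs in column $v$ at the heights of $v$'s neighbors encode domination. You also correctly flag the rigidity issue as the crux. Where your plan diverges from a working proof is in how you propose to resolve that crux.

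You say you will ``pin down every vertical line'' with enforcer pairs and then rely on the horizontal lines to separate the functional pairs. But think through the forward direction: in column $v$ you have one functional pair for \emph{each} neighbor of $v$, say $d$ of them. Your $k$ horizontal lines (one per band) will hit at most one of these pairs per band, and in particular only the pairs corresponding to vertices actually in $S$ get separated horizontally. The remaining $d-1$ functional pairs in that column are still bichromatic and, with all vertical lines pinned to column boundaries, nothing separates them. So a valid colorful dominating set does \emph{not} yield a valid separation under your scheme. Conversely, if you instead make the vertical budget slack enough to clean up those leftover pairs, a solver can use vertical lines to separate \emph{all} functional pairs and place its $k$ horizontal lines anywhere, decoupling $p$ from domination entirely.

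The paper's fix is not to pin the vertical lines but to \emph{count} them. It assumes every $v$ has the same degree $d$, places $d$ collinear ``guard'' points (same $y$-coordinate, alternating colors) in each column, and sets $q=(d-1)n+1$. The guards force at least $d-1$ vertical lines per column, and the budget forces exactly $d-1$; crucially, the guards and the $d$ functional pairs are interleaved so that $d-1$ vertical lines can separate the guards and at most $d-1$ of the functional pairs, but never all $d$. Hence exactly one functional pair per column must be handled by a horizontal signal line, which is precisely the domination constraint. Your proposal is missing this counting gadget (or an equivalent mechanism), and the ``pin down every vertical line'' idea cannot substitute for it.
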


\begin{proof}
Let $G = (R = R_1 \uplus \cdots \uplus R_k, B); k)$ be an instance of \textsf{C-RBDS}. Without loss of generality, we assume that every vertex $v \in B$ has the same degree $d$ and that $d$ is even\footnote{When this is not the case, let $\Delta := \max_{v \in B}\{d(v)\}$. We may introduce an additional ``dummy color'' class $R_0$ with a forced dummy vertex (for example, by adding a $d$-star whose center is in $B$ and whose leaves are in $R_0$), and for every vertex $v \in B$ we may introduce $\Delta - d(v)$ new pendant red neighbors of $v$ in $R_0$. If $d$ happens to be odd, use $\Delta+1$ in this process instead of $\Delta$ to ensure that $d$ is even.}. We may also assume that all $R_i$'s have the same number of vertices (padding $R_i$ with $\max_{j=1}^k\{|R_j|\} - |R_i|$ dummy isolated vertices if required). We let $|R_1| = \cdots = |R_k| = m$ and $n := |B|$. We also assume that $k$ is even, again without loss of generality. Finally, we impose an arbitrary but fixed ordering on each $R_i$ and on the sets $N(v)$ for every $v \in B$.  

It will be convenient to think of the point set of the reduced instance as lying within a sufficiently large bounding box, say $\mathcal{B}$. To describe the placement of the points, we impose an uniform $(k+2) \times (n+1)$ grid on $\mathcal{B}$, which divides $\mathcal{B}$ into $k+2$ horizontal regions $H_0, H_1, \ldots, H_k, H_{k+1}$ (labeled from bottom to top) and $(n+1)$ vertical regions $V_0, V_1, \ldots, V_n$ (labeled from left to right) which we call \emph{tracks}. Each horizontal track $H_i$ for $i \in [k]$ is divided further into $m+2$ horizontal strips and each vertical track $V_j$ for $j \in [n]$ is divided further into $2d$ vertical strips. Within a horizontal track, the first and last horizontal strips are called \emph{buffer zones}. Further, when we refer to the $p^{th}$ horizontal strip within any horizontal track, the buffer zones are not counted. 

For $i \in [k]$, $j \in [n]$, $\alpha \in [m]$, and $\beta \in [2d]$, we refer to the intersection of the $\alpha^{th}$ horizonal strip in $H_i$ and the $\beta^{th}$ vertical strip in $V_j$ as $Z_{ij}[\alpha,\beta]$. We note that two points that share the same $x$-coordinate ($y$-coordinate) have to be separated by a vertical (horizontal) line. We now describe three sets of points that we need to add: the first will lead us to a choice of a vertex from each $R_i$, the second set encodes the structure of the graph, and the third set ensures that the chosen set is indeed a dominating set by forcing the use of a budget in a certain way. 

\textbf{Selectors.} Consider the first vertical track. Here, for any even (odd) $i \in [k]$, we add a red (blue) point to the top buffer zone and a blue (red) point to the bottom buffer zone of the $i^{th}$ track. These $2k$ points are called the \emph{selectors}. We ensure that all selectors have the same $x$-coordinate. Intuitively, the selector points ensure that any valid solution is required to use at least one horizontal line drawn in each of the $k$ horizontal tracks --- and the budget will eventually ensure that any valid solution uses exactly one. Which horizontal strip these lines end up in will act as a signal for our choice of vertices in the dominating set in the reverse direction. 

\textbf{Functional Points.} Next, consider any vertex $v_j \in B$. For every $u \in N(B)$, we add a pair of red and blue points in $Z_{ij}[\alpha,2\beta]$ if $u$ is the $\alpha^{th}$ vertex of $R_i$ and is the $\beta^{th}$ neighbor of $v_j$. These points are added to the bottom-left and top-right corners of the box. If $\beta$ is odd (even)\footnote{The organization of colors based on the parity of the columns in the case of functional pairs and rows in the case of selectors is to ensure that there are no additional separation requirements other than the ones that we desire to encode.}, then the bottom-left corner gets a blue (red) point and the opposite corner gets the red (blue) point. These pairs of points will be referred to as the \emph{functional pairs}. The functional pairs encode the structure of the graph, and we would like to ensure that the responsibility of separating at least one functional pair in each vertical track falls on a horizontal line used to separate the selectors. We force this by choosing an appropriately small budget for vertical lines, which ensures that not all separations can be accounted for using vertical lines. However, we still need to control how the vertical budget is utilized across different tracks. To this end, we introduce the final piece of our construction, which is a special gadget that forces the use of a certain number of vertical lines in each vertical track. 

\textbf{Guards.} In the horizontal track $H_0$, we place $d$ points, all with the same choice of $y$ coordinate which is arbitrary but fixed. Within the $j^{th}$ vertical track, $x$-coordinate of the $r^{th}$ point is chosen so that the point lies in the middle of the $(2r)^{th}$ vertical strip of $V_j$. The color of the first vertex in the track $V_i$ is blue if $i$ is odd and red if $i$ is even. This ensures that for $2 \leq i \leq n$, the first point in the $i^{th}$ track has the same color as the last point of the $(i-1)^{th}$ track. The colors of the remaining points are chosen so that consecutive points within the same track have distinct colors. Equivalently, the $s^{th}$ guard vertex in the $i^{th}$ track is blue (red) if $s$ and $i$ are both odd (even), and is red (blue) if $s$ is odd (even) and $i$ is even (odd). We refer to these points as \emph{guards}. We briefly discuss the role of the guard vertices: we note that the guards can be separated from each other by $(d-1)$ vertical lines, and since all guards have the same $y$-coordinate, this is the only way to separate them. However, there is no set of $(d-1)$ lines that can separate all the guards \emph{and} all the associated functional pairs in any vertical track. The budget for the vertical lines will be such that we can only afford to separate the guards as we are required to do, and we will be forced to separate at least one functional pair using a horizontal line, which will essentially ensure that the selectors have chosen vertices corresponding to a dominating set. 

We let $p := k+2$ and $q := (d-1)n + 1$. We also add three pairs of ``enforcer'' points as shown in Figure~\ref{fig:schematic}. This completes the description of the construction. We now turn to the argument for equivalence. 

\paragraph*{Forward Direction.} Let a colorful red-blue dominating set $S$ be given. Recall that $S$ contains exactly one vertex from each $R_i$ and every $v \in B$ has a neighbor in $S$. We now propose the following solution for the reduced instance. First, choose any three cannonical lines that separate the enforcer pairs as depicted by the dashed lines in Figure~\ref{fig:schematic}. We refer to this collection of lines as the \emph{fence}. Further, for $i \in [k]$, let $f(i)$ be the index of the vertex from $R_i$ in $S$ --- in other words, $S$ picks the $f(i)^{th}$ vertex from $R_i$ for all $i \in [k]$. Then, choose an arbitrary line in the $f(i)^{th}$ horizontal strip in the $i^{th}$ horizontal track and add it to the solution. We call these the \emph{signal} lines. Now, consider the $j^{th}$ vertical track. Let $g(j)$ be the smallest number such that the $g(j)^{th}$ neighbor of $v_j$ belongs to $S$ --- in other words, $g(j)$ is the smallest index among vertices of $N(v_j) \cap S$ with respect to the fixed order on $N(v_j)$. Note that $g(j)$ is well-defined for all $j \in [n]$ because $S$ is a dominating set. Now we introduce the following lines, which we refer to as the \emph{defender} lines:

\begin{itemize}
    \item for all $1 \leq r < g(j)$, we choose a vertical line in the $(2r)^{th}$ vertical strip whose $x$-coordinate is to the \emph{right} of the guard vertex in the strip, and
    \item for all $g(j) < r \leq d$, we choose a vertical line in the $(2r)^{th}$ vertical strip whose $x$-coordinate is to the \emph{left} of the guard vertex in the strip. 
\end{itemize}

This solution clearly uses $(k+2)$ horizontal lines (one signal line for each $H_i$, $i \in [k]$, and two lines based on the enforcer pairs) and $(d-1)n + 1$ vertical lines (we employ $(d-1)$ lines in each $V_j$, $j \in [n]$, and one line based on the enforcer pair), as desired. Now, we claim that these lines separate $R$ from $B$ by the following case analysis. 

\begin{itemize}
    \item \textbf{Guards.} The fence separates the guards from all non-guard points. Observe that the $(d-1)$ defender lines are chosen so that there is a line between every consecutive pair of guard vertices. The last guard vertex in any vertical strip has the same color as the first guard vertex in the adjacent vertical strip, so it is clear that the guard vertices belong to monochromatic cells.

    \item \textbf{Selectors.} The fence separates the selectors all non-selector points. Between any pair of selector points in a fixed horizontal track, the signal line chosen from the track separates said pair. Note that the last (or highest) selector point of $H_i$ and the first (or lowest) selector point of $H_{i+1}$ have the same color (for any $1 \leq i < k$). Finally, the bottom enforcer point that lies in $H_1$ and the first selector point of $H_1$ and the top enforcer point that lies in $H_k$ and the last selector point of $H_k$ are all red (since $k$ is even). Therefore, all selector points lie in monochromatic cells. 
    
    \item \textbf{Functional Pairs.} The fence separates the functional pairs from the selectors and the guards. Further, note that it is sufficient to argue that every functional pair is separated, since the points from different pairs are either in non-consecutive vertical strips and are separated anyway, or they are in adjacent vertical strips, in which case they have the same color by construction. Turning to the functional pairs themselves, it is easy to verify that if they are not separated by the defender lines, then they are separated by a signal line. We remark that pairs are separated ``doubly'' by both signal lines as well as defender lines --- this happens in vertical tracks corresponding to vertices that have more than one neighbor in $S$.
\end{itemize}

\paragraph*{Reverse Direction.} Let $L$ be a solution that uses $p$ horizontal lines and $q$ vertical lines. To begin with, note that the fence described in the forward direction is forced in the reverse direction by the enforcer points. This leaves us with a budget of $k$ horizontal lines and $(d-1)n$ vertical lines. Since no horizontal line can separate any pair of guard points, we know that every vertical track requires $(d-1)$ lines to separate the guard points that lie in it. Similarly, we also see that every horizontal track requires a horizontal line to separate the selector pair contained in the track, which would be impossible to separate using any vertical line. Because of the budget available at this point, we conclude that any valid solution consists of exactly one line in each horizontal strip $H_i$ for $i \in [k]$ and exactly $(d-1)$ vertical lines in each vertical strip $V_j$ for $j \in [n]$. We refer to these lines as signal and defender lines, respectively. 

Consider the signal line of the solution that lies in $H_i$. If it lies in the $\alpha^{th}$ strip of the track, we choose the $\alpha^{th}$ vertex from $R_i$ in our dominating set. Let $S$ denote this collection of $k$ vertices corresponding to the signal lines. We claim that every vertex $v_j \in B$ has a neighbor in $S$. Indeed, suppose not. Then consider the vertical track $V_j$. This track has $d$ functional pairs. The line segment joining each of these functional pairs does not intersect any of the signal lines (indeed, if it did, then the vertex corresponding to the signal line would be a neighbor of $v_j$ by construction). On the other hand, the track $V_j$ contains only $(d-1)$ defender lines, and it is easy to see that no vertical line can separate more than one functional pair (since each functional pair belongs to a distinct vertical strip), and therefore, $(d-1)$ vertical lines are not enough to separate $d$ functional pairs. This contradicts the assumption that $L$ separates $R$ from $B$, and also concludes our argument in the reverse direction. 
\end{proof}

\begin{figure}[htbp]
    \includegraphics{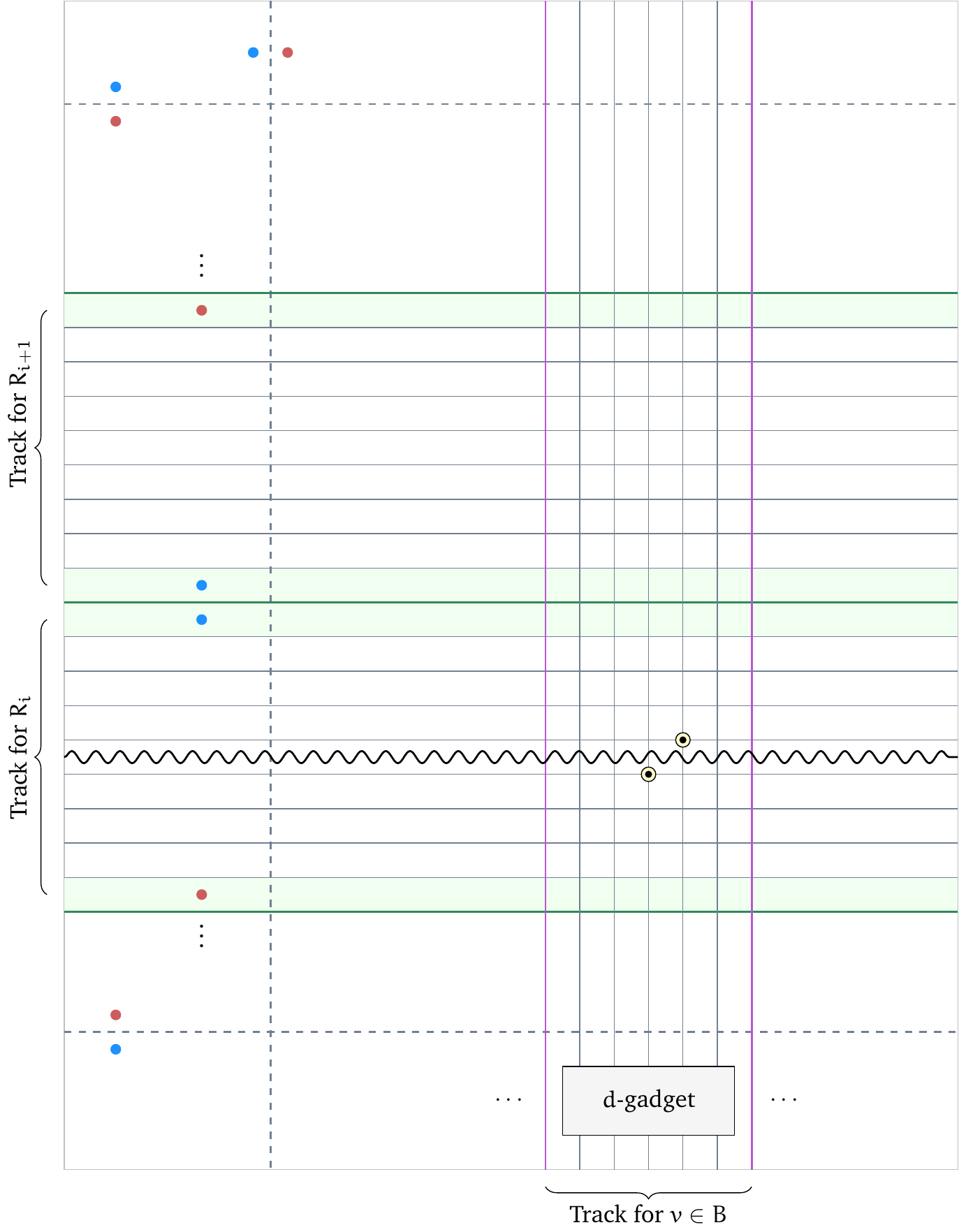}

    \caption{This figure shows the overall layout of the reduced instance demonstrating the tracks and strips as described. The colored strips are the buffer zones. For each $v \in B$, we introduce $d$ pairs of red-blue points at heights corresponding to the strip representing neighbors of $v$ (a possible pair is given by the highlighted points in this picture). Based on the budget for vertical lines and the structure of the $d$-gadget placed at the bottom of each zone corresponding to a blue vertex, we force that at least one of these pairs must be separated by a horizontal line lying in a strip corresponding to one of the neighbors of $v$ (represented in this figure by the wavy line), which will eventually correspond to the desired dominating set.}
    \label{fig:schematic}
    
\end{figure}



\section{Concluding Remarks}
We showed that \textsf{RBS} and \textsf{APRBS} are polynomial-time solvable when points lie on a circle. Further, we introduced a natural variant that separates out the budget for horizontal and vertical lines in the axis-parallel variant, and demonstrated that $(p,q)$-\textsf{APRBS} is \textsf{W[2]}-hard when parameterized by $p$. The most natural question in the context of the discussion about special classes of input is if the algorithm for the case of points on a circle can be generalized to points in convex position. We conjecture that this should be possible by a suitable adaptation of our arguments here. In the general setting, since \textsf{APRBS} is \textsf{FPT} when parameterized by $k$~\cite{KratschMMPS2020}, the question of whether the problem admits a polynomial kernel would be natural to explore further. Our \textsf{W[1]}-hardness reduction for $(p,q)$-\textsf{APRBS} may provide some starting points towards an answer in the negative --- in its present form the parameter $k$ of the reduced instance depends on $k,d,$ and $n$. \textsf{APRBS} would not admit a polynomial kernel (under standard complexity-theoretic assumptions) if this dependence can be reduced to $k$ and $n$ only~\cite{DomLS14}.


\bibliography{references}

\appendix

\end{document}